\newtheorem{theorem}{Theorem}
\newtheorem{proposition}{Proposition}
\newtheorem{remark}{Remark}
\newtheorem{lemma}{Lemma}
\newtheorem{problem}{Problem}
\theoremstyle{definition}
\newtheorem{assumption}{Assumption}
\newcommand{\Figref}[1]{Figure\,\ref{#1}}
\newcommand{\figref}[1]{Fig.\,\ref{#1}}
\newcommand{\secref}[1]{Sec.\,\ref{#1}}
\newcommand{\mE}{{\mathbb{E}}}
\newcommand{\mP}{{\mathbb{P}}}
\newcommand{\mI}{{\mathds{1}}}
\newcommand{\R}{\mathbb{R}}
\newcommand{\mZ}{\mathbb{Z}}
\newcommand{\SProb}{\Psi}
\newcommand{\SSet}{\mathcal{C}}
\DeclareMathOperator*{\argsup}{arg\,sup}
\title{\LARGE \bf
Physics-informed RL 
for Maximal Safety Probability Estimation 
}
\author{Hikaru Hoshino$^{1}$ and Yorie Nakahira$^{2}$
\thanks{*This work was supported in part by Grant-in-Aid for Scientific Research (KAKENHI) from the Japan Society for Promotion of Science (\#23K13354), in part by the PRESTO Grant Number JPMJPR2136 from Japan Science and Technology agency, in part by the Department of the Navy, Office of Naval Research, grant number N00014-23-1-2252, and in part by Mobility21 National University Transportation Center, which is sponsored by the US Department of Transportation. 
Any opinions, findings, and conclusions or recommendations expressed in this material are those of the authors and do not necessarily reflect the views of the Office of Naval Research.
}
\thanks{$^{1}$Hikaru Hoshino is with the Department of Electrical Materials and Engineering, University of Hyoto, 
        2167 Shosha, Himeji, Hyogo, 671-2280 Japan
        {\tt\small hoshino@eng.u-hyogo.ac.jp}}%
\thanks{$^{2}$Yorie Nakahira is with the Department of Electrical and Computer Engineering, Carnegie  Mellon  University, 
        4815 Frew St, Pittsburgh, PA 15213, USA
        {\tt\small ynakahira@andrew.cmu.edu}}%
}
\begin{document}

\maketitle
\thispagestyle{empty}
\pagestyle{empty}

\begin{abstract}

Accurate risk quantification and reachability analysis are crucial for safe control and learning, but sampling from rare events, risky states, or long-term trajectories can be prohibitively costly. Motivated by this, we study how to estimate the long-term safety probability of maximally safe actions without sufficient coverage of samples from risky states and long-term trajectories. The use of maximal safety probability in control and learning is expected to avoid conservative behaviors due to over-approximation of risk. Here, we first show that long-term safety probability, which is multiplicative in time, can be converted into additive costs and be solved using standard reinforcement learning methods. We then derive this probability as solutions of partial differential equations (PDEs) and propose Physics-Informed Reinforcement Learning (PIRL) algorithm. The proposed method can learn using sparse rewards because the physics constraints help propagate risk information through neighbors. This suggests that, for the purpose of extracting more information for efficient learning, physics constraints can serve as an alternative to reward shaping. The proposed method can also estimate long-term risk using short-term samples and deduce the risk of unsampled states. This feature is in stark contrast with the unconstrained deep RL that demands sufficient data coverage. These merits of the proposed method are demonstrated in numerical simulation.

\end{abstract}

\section{Introduction}

Risk quantification and reachability analysis are crucial for safety-critical autonomous control systems. For example, these techniques are widely used in stochastic safe control~\cite{Chapman2019,Wang2022}, safe exploration~\cite{Berkenkamp2019,Kim2021}, and safe reinforcement learning~\cite{Srinivasan2020,Qin2021,Chen2023,Thananjeyan2020}. 
However, it is challenging to accurately quantify long-term risks and find maximally safe control policies for complex nonlinear systems.  
There are stringent trade-offs between accuracy, time horizon, sample complexity, and computation. Such tradeoffs are particularly stringent when the risk is associated with rare events and the dimensions of the systems are high~\cite{Zhang2021}. 
In addition, unsafe events, risky states, and long-term trajectories can be prohibitively costly to sample from physical systems. 
Motivated by these challenges, this paper proposes an efficient Physics-Informed Reinforcement Learning (PIRL) that can estimate long-term maximal safety probabilities with \textit{short-term} data that do \textit{not} contain many unsafe events.  


%
%


Many learning-based techniques were developed to quantify various forms of risk. For deterministic systems (worst-case framework), RL techniques were adapted for reachability analysis~\cite{Fisac2019}. For stochastic systems, policy gradient approaches were used to minimize CVaR and coherent risk measures~\cite{Tamar2015}. 
Deep Q-learning was used to learn the probabilities of constraint violations of time horizon one (at each time), which are then used to constrain learning and exploration~\cite{Thananjeyan2020,Srinivasan2020}.
Estimation of long-term probabilities under maximally safe actions is an optimization problem with multiplicative costs over time whose optimality conditions were characterized~\cite{Abate2008}. 
However, solving such optimization problems is not trivial, particularly for high-dimensional systems. Although techniques such as taking logarithms are often used to convert multiplicative costs into summations in practice, such techniques cannot be used directly in this setting (Remark\,\ref{remark:multiplicative} for details). Here, we show that long-term safety probabilities (in the form of multiplicative costs of expected index functions) are transferable to additive costs, for which many RL methods can be used. 

Although RL has the potential to offer scalable risk quantification techniques, one may not know how accurate the converged solutions and generalization to states or time horizons whose samples are unavailable. This is problematic if the quantified risk is to be used in safety-critical systems, because the safety of subsequent decision-making techniques depends on accurate risk quantification. To tackle these challenges, we propose to leverage Physics-Informed Neural Networks (PINN)~\cite{Raissi2019}.  PINN has a demonstrated potential in generalization due to the use of physics constraints~\cite{Cai2021,Cuomo2022}. 
PINN-based approach has been used to quantify safety probabilities of a given controller with provable generalization~\cite{Wang2023}.
Here, we derive a PDE characterizing the safety probability 
and integrate it into a PIRL framework. 


Due to the integration of RL and PINN, the proposed framework has the following advantages. 
\begin{itemize}
    \item Expansion of feasible regions: By exploring a maximally safe controller, the set of state spaces with tolerable risks is expanded. When the maximal safety probability is used to constrain action and exploration, the system is expected to be less conservative (see \figref{fig:safety_prob}).
    \item Learning from sparse rewards in space and time: The proposed method can learn from binary rewards that are also sparse in time and achieve objectives similar to reward shaping (see \figref{fig:reward_shaping}). This is achieved by leveraging physics constraints to extract and propagate information from neighbors and boundaries. 
    \item Generalization to longer-horizon and unsampled risky states: The proposed method can estimate long-term safety probability using short-term samples and achieve comparable learning effect using reduced number of unsafe events (see \figref{fig:generalization}). This feature is beneficial when samples from long-term trajectories are unavailable or when risky states are costly to sample. 
\end{itemize}
The proposed method is built on Deep Q-Network (DQN) algorithm \cite{Mnih2013}, but the framework is generalizable to other deep RL techniques. 
While several PIRL frameworks have been proposed (see \cite{Banerjee2023} for a review), to the best of our knowledge, this work is the first to combine an RL problem with PINN for the purpose of estimating maximal safety probabilities and the corresponding policies.

\subsection{Notation}



Let $\mathbb{R}$ and $\mathbb{R}_+$ be the set of real numbers and the set of nonnegative real numbers, respectively. 
Let $\mZ$ and $\mZ_+$ be the set of integers and the set of non-negative integers. 
For a set $A$, $A^\mathrm{c}$ stands for the complement of $A$, and $\partial A$ for the boundary of $A$. 
Let $\lfloor x \rfloor \in \mZ$ be the greatest integer less than or equal to $x\in\R$.  
Let $\mathds{1}[\mathcal{E}]$ be an indicator function, which takes 1 when the condition $\mathcal{E}$ holds and otherwise 0.
Let $\mP[ \mathcal{E} | X_0 = x ]$ represents the probability that the condition $\mathcal{E}$ holds involving a stochastic process $X = \{ X_t \}_{t\in\R_+}$ conditioned on $X_0 = x$.  
Given random variables $X$ and $Y$, let $\mathbb{E}[X]$ be the expectation of $X$, and $\mathbb{E}[X|Y=y]$ be the conditional expectation of $X$ given $Y=y$. We use upper-case letters (\emph{e.g.}, $Y$) to denote random variables and lower-case letters (\emph{e.g.}, $y$) to denote their specific realizations. 
For a scalar function $\phi$, 
$\partial_x \phi$ stands for the gradient of $\phi$ with respect to $x$, and $\partial_{x}^2 \phi$ for the Hessian matrix of $\phi$. 
Let $\mathrm{tr}(M)$ be the trace of the matrix $M$. 

\section{Problem Statement}

We consider a control system with stochastic noise of $w$-dimensional Brownian motion $\{ W_t \}_{t \in \R_+}$ starting from $W_0 = 0$. 
The system state $X_t \in \mathbb{X} \subset \mathbb{R}^n$ evolves according to the following stochastic differential equation (SDE): 
\begin{align}
 \mathrm{d}X_t = f(X_t, U_t) \mathrm{d}t + \sigma(X_t, U_t) \mathrm{d}W_t,  \label{eq:sde}   
\end{align}
where $U_t \in \mathbb{U} \subset \mathbb{R}^m$ is the control input. 
Throughout this paper, we assume sufficient regularity in the coefficients of the system \eqref{eq:sde}.
That is, the functions $f$ and $\sigma$ are chosen in a way such that the SDE \eqref{eq:sde} admits a unique strong solution (see, e.g., Section IV.2 of \cite{Fleming06}). 
The size of $\sigma(X_t)$ is determined from the uncertainties in the disturbance, unmodeled dynamics, and prediction errors of the environmental variables.

For numerical approximations of the solutions of the SDE and optimal control problems, we consider a discretization with respect to time with a constant step size $\Delta t$ under piecewise constant control processes. 
For $0=t_0 < t_1 < \dots < t_k < \dots$, where $t_k := k\Delta t$, $k \in \mZ_+$, by defining the discrete-time state $X_k := X_{t_k}$ with an abuse of notation, the discretized system can be given as
\begin{align}
  X_{k+1} = F^u(X_k, \Delta W_k), \label{eq:descrite_system}
\end{align}
where $\Delta W_k := \{ W_t \}_{t \in [t_k, t_{k+1})}$, and $F^{u}$ stands for the state transition map derived from \eqref{eq:sde} under a Markov control policy $u : [0, \infty) \times \mathbb{X} \to \mathbb{U}$. 
From an optimal control perspective, using a Markov policy is not restrictive when the value function has a sufficient smoothness under several technical conditions (see \cite[Theorem IV.4.4]{Fleming06} and Assumption\,\ref{assumption1} below). 
Note that using a piece-wise constant control process with a Markov policy $u$ implies that the control process is given as $U_t = u(\delta(t), X_{\delta(t)})$, for $t\in \R_+$,  where $\delta(t) := \lfloor t/\Delta t \rfloor \Delta t$, and the discretized system \eqref{eq:descrite_system} has the Markov property at the discrete times \cite{Mao2013}.

Safety of the system can be defined by using a safe set $\SSet \subset \mathbb{X}$.  
For the discretized system \eqref{eq:descrite_system} and for a given control policy $u$, 
the safety probability $\SProb^{u}$ of 
the initial state $X_0 = x$ for the outlook horizon $\tau \in \R$ 
can be characterized as 
the probability that the state $X_k$ stays within the safe set $\SSet$ for 
$k \in \mathcal{N}_\tau := \{0, \dots, N(\tau)\}$, where $N(\tau) := \lfloor \tau / \Delta t \rfloor$, \emph{i.e.}, 
\begin{align}
\SProb^u(\tau, x)
:= \mP[ X_k \in \SSet,\, \forall k \in \mathcal{N}_\tau ~|~ X_0 = x, u ].  
\end{align}
Then, the objective of this paper can be described as follows.  
\begin{problem} \label{problem1}
   Consider the system \eqref{eq:descrite_system} starting from an initial state $x \in \SSet$. Then, estimate the maximal safety probability defined as 
   \begin{align}
     \Psi^\ast(\tau,x) := \sup_{ u \in \mathcal{U} } \Psi^u(\tau,x),
   \end{align}
   where $\mathcal{U}$ is the class of bounded and Borel measurable Markov control policies.    
\end{problem}

For the results stated in the next section, we assume the following technical conditions:
\begin{assumption} \label{assumption1}
 We stipulate that 
 \begin{enumerate}
   \item[(a)] $\mathbb{U}$ is compact. 
   \item[(b)] $f$, $\sigma$ and their first and second partial derivatives with respect to the state are continuous.  
   \item[(c)] $\sigma(x, u)$ is an $n \times n$ matrix, such that for all $(x, u) \in \mathbb{X}\times \mathbb{U}$ and $\xi \in \R^n$, $\sum_{i,j=1}^n \sigma_{ij}(x,u) \xi_i \xi_j \ge \gamma |\xi|^2$, where $\gamma >0$. 
   \item[(d)] $\SSet^\mathrm{c}$ is a bounded closed subset of $\mathbb{X}$ with $\partial \SSet$, a three-times continuously differentiable manifold.
   \item[(e)] $\Psi^u(\tau, x)$ converges to $\Psi^u_\mathrm{c}(\tau,x):= \mP[X_t\in \SSet, \forall t \in[0,\tau]\,|\, X_0=x, u ]$ as $\Delta t \to 0$. \label{as:safe_prob_limit}
 \end{enumerate}
\end{assumption}
The assumptions (a) to (d) are used for assuring the smoothness of the value function discussed in \secref{sec:pde}. 
The assumption (e) is needed to ensure the consistency between the safety probability in the discrete time and the PDE condition in the continuous time, and similar conditions are achieved in \cite{Mao2013,Bayraktar2023}.

\section{Proposed Framework}

Here we present a physics-informed RL framework for  safety probability estimation. 
For this, a problem formulation with additive cost is presented in \secref{sec:rl_probem}, and a PDE characterization for the safety probability is derived in \secref{sec:pde}. 
The proposed framework is presented in  \secref{sec:framework}.

\subsection{Problem Formulation with Additive Cost} \label{sec:rl_probem}

Problem\,\ref{problem1} can be regarded as a stochastic optimal control problem with a multiplicative cost 
to be maximized, because the objective function $\SProb^u$ is naively written as follows: 
\begin{align}
   \SProb^u(\tau, x)   
   = &\mathbb{E}[ \mathds{1}\left[ X_k \in \SSet, \, \forall k \in \mathcal{N}_\tau \, \right]  \,|\, X_0 = x , u ] \notag \\
    = &\mathbb{E}\Biggl[ \prod_{k=0}^{N(\tau)} \mathds{1}[ X_k \in \SSet ] \,\Bigg|\, X_0 = x, u \Biggr], 
    \label{eq:safe_prob}
\end{align}

\begin{remark} \label{remark:multiplicative}
To convert a multiplicative cost into an additive cost, there are two typical ways taken in RL problem formulations. 
One is to use a log scale translation of the return. 
However, this approach fails in the case of the safety probability. This is because each term is conditioned on the previous steps, and thus the reward at the time step $k$ can not be represented as a function of the state $X_k$ as follows:  
\begin{align}
   \log \SProb^u (\tau, x) =&  \log \mP[ \cap_{k=0}^{N(\tau)} \mathcal{E}_k | X_0 =x, u ] \notag \\
   = & \log \mP[\mathcal{E}_0| X_0 =x, u]\, \mP[\mathcal{E}_1|\mathcal{E}_0, X_0 =x, u] \notag \\
     & \cdot \mP[\mathcal{E}_2|\mathcal{E}_{1}, \mathcal{E}_0, X_0 =x, u] \cdots \notag \\
   = & \log \prod_{k=0}^{N(\tau)} \mP[\mathcal{E}_k|\mathcal{E}_{k-1},\dots, \mathcal{E}_0, X_0 =x, u] \notag \\
   = &  \sum_{k=0}^{N(\tau)} \log \mP[\mathcal{E}_k|\mathcal{E}_{k-1}, \dots, \mathcal{E}_0, X_0 =x, u],  
\end{align}
where $\mathcal{E}_k$ represents the condition that $X_k \in \SSet$.
The second approach is to augment the state space by considering a sequence of observations as a state, i.e., $\{ x_0, x_1, \dots, x_k\}$. 
In this paper, we will consider an augmented state that is only one-dimension higher than the original state, which significantly reduces the dimension of the state space. \end{remark}

In this paper, by 1) introducing an appropriate augmented system, and 2) using the idea in \cite{Summers2010} of representing the cost in a form of sum of multiplicative costs, 
we show that the above multiplicative cost 
can be naturally transformed to an additive cost.
For this, we consider a variable $H_k$ that represents the remaining time before the outlook horizon $\tau$ is reached, \emph{i.e.}, 
\begin{align}
    H_0 = \tau, \quad H_{k+1} = H_{k} - \Delta t.  
\end{align}
Then, let us consider the augmented state space $\mathcal{S} := \R \times \mathbb{X} \subset \R^{n+1}$ and the augmented state $S_k \in \mathcal{S}$, where we denote the first element of $S_k$ by $\tilde{H}_k$ and the other elements by $\tilde{X}_k$, \emph{i.e.}, 
\begin{align}
 S_k =[\tilde{H}_k, \tilde{X}_k^\top]^\top,  
\end{align}
where we use the tilde notation to distinguish between the original
dynamics \eqref{eq:descrite_system} and those for the additive cost representation introduced below. 
For the state $S_k$, consider the stochastic dynamics starting from the initial state
\begin{align}
 S_0 = s := [\tau, x^\top]^\top \in \mathcal{S}
\end{align}
with $\tau \in \R$ given as follows: 
for $\forall k \in \mZ_+$, 
\begin{align}
  S_{k+1} = 
  \begin{cases}
      \tilde{F}^u(S_k, \Delta W_k),  & S_k \notin \mathcal{S}_\mathrm{abs},   \\
      S_k,  & S_k \in \mathcal{S}_\mathrm{abs}, 
  \end{cases}
 \label{eq:augmented_dynamics}
 \end{align}
with the function $\tilde{F}^u$ given by 
\begin{align}
  \tilde{F}^u(S_k, \Delta W_k) := 
 \begin{bmatrix} 
    \tilde{H}_k - \Delta t \\
    F^u(\tilde{X}_k, \Delta W_k) 
 \end{bmatrix},   
\end{align}
and the set of absorbing states $\mathcal{S}_\mathrm{abs}$ given by 
\begin{align}
 \mathcal{S}_\mathrm{abs} := \{ [\tilde{\tau}, \tilde{x}^\top]^\top \in \mathcal{S} ~|~ \tilde{\tau} < 0 \, \lor \, \tilde{x} \in \SSet^\mathrm{c}   \}.
 \label{eq:terminal_states}
\end{align}
The notion of absorbing state is commonly used in RL literature \cite{sutton18}, and 
we have $S_k = [\tilde{H}_k, \tilde{X}_k^\top]^\top = [H_k, X_k^\top]^\top$ for the states satisfying $S_k \notin \mathcal{S}_\mathrm{abs}$, but not for $S_k \in \mathcal{S}_\mathrm{abs}$ where the state $S_k$ transitions to itself.

Then, the following proposition states that the multiplicative cost representation \eqref{eq:safe_prob} can be transformed to an additive cost by using the augmented dynamics \eqref{eq:augmented_dynamics}. 

\begin{proposition} \label{prop:RL}
Consider 
the system \eqref{eq:augmented_dynamics} starting from an initial state $s = [\tau, x^\top]^\top \in \mathcal{S}$ 
and the reward function $r: \mathcal{S} \to \R$ given by
\begin{align} \label{eq:reward}
  r(S_k) := 
      \mI[ \tilde{H}_k \in \mathcal{G} ]\, \mI[ S_k \notin \mathcal{S}_\mathrm{abs}] 
\end{align}
with $\mathcal{G} := [0, \Delta t)$. 
Then, for a given control policy $u$, the value function $v^{u}$ defined by  
\begin{align}
  &v^{u}(s) := \mathbb{E} \left[ \sum_{k=0}^\infty r( S_k ) ~\middle | ~ S_0= s, u  \right]  \label{eq:safe_ocp_cost}
\end{align}
takes a value in $[0,1]$ and is equivalent to the safe probability $\SProb^u(\tau, x)$, i.e., 
\begin{align}
 v^{u}(s) = \SProb^{u}(\tau, x).     
\end{align}
\end{proposition}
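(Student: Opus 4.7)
The plan is to show that for any trajectory of the augmented system, the infinite sum $\sum_{k=0}^\infty r(S_k)$ is a $\{0,1\}$-valued random variable whose value coincides exactly with the indicator $\mI[X_k \in \SSet,\,\forall k \in \mathcal{N}_\tau]$ appearing in \eqref{eq:safe_prob}. Once that pointwise equivalence is established, the claim follows by taking conditional expectation given $S_0 = s$ and $u$.

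First, I would analyze how $\tilde{H}_k$ evolves. By a short induction on $k$, starting from $\tilde{H}_0 = \tau$, as long as $S_0,\dots,S_{k-1} \notin \mathcal{S}_\mathrm{abs}$ we have $\tilde{H}_k = \tau - k\Delta t$ and $\tilde{X}_k = X_k$, so the augmented dynamics truly track the original dynamics \eqref{eq:descrite_system}. Once the process enters $\mathcal{S}_\mathrm{abs}$ at some step $k^\star$, the self-loop in \eqref{eq:augmented_dynamics} freezes $S_k$ at its current value for all $k \geq k^\star$.

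Next, I would pin down the set of steps at which the reward can be nonzero. Since $r(S_k) = \mI[\tilde{H}_k \in [0,\Delta t)]\,\mI[S_k \notin \mathcal{S}_\mathrm{abs}]$, the time-condition $\tilde{H}_k \in [0,\Delta t)$ combined with $\tilde{H}_k = \tau - k\Delta t$ (valid before absorption) isolates a unique candidate index $k = N(\tau) = \lfloor \tau/\Delta t \rfloor$. For $k < N(\tau)$, either the state has already been absorbed or $\tilde{H}_k \geq \Delta t$, so $r(S_k) = 0$. For $k > N(\tau)$, the one-step update either already froze $S_k$ in $\mathcal{S}_\mathrm{abs}$ or drives $\tilde{H}_{N(\tau)+1} < 0$, placing $S_{N(\tau)+1}$ in $\mathcal{S}_\mathrm{abs}$; in both cases $r(S_k) = 0$ thereafter. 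Hence $\sum_{k=0}^\infty r(S_k) = r(S_{N(\tau)})$ almost surely.

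Finally, I would evaluate $r(S_{N(\tau)})$. Using the induction from the first step, the event $\{S_k \notin \mathcal{S}_\mathrm{abs}$ for all $k = 0,\dots,N(\tau)\}$ coincides with the event $\{X_k \in \SSet$ for all $k \in \mathcal{N}_\tau\}$, because $\tilde{H}_k = \tau - k\Delta t \geq 0$ throughout this range (so the only active clause of $\mathcal{S}_\mathrm{abs}$ in \eqref{eq:terminal_states} is $\tilde{X}_k \in \SSet^\mathrm{c}$). Therefore
\begin{equation}
\sum_{k=0}^\infty r(S_k) \;=\; \mI\!\left[ X_k \in \SSet,\, \forall k \in \mathcal{N}_\tau \right] \quad \text{a.s.},
\end{equation}
and taking $\mE[\,\cdot\,|\,S_0 = s, u]$ gives $v^u(s) = \SProb^u(\tau,x) \in [0,1]$, as claimed. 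The main bookkeeping obstacle is to rule out additional nonzero contributions to the sum: I expect that the cleanest way is the case split $k < N(\tau)$, $k = N(\tau)$, $k > N(\tau)$ above, together with the observation that once $\tilde{H}$ drops below $0$ it never re-enters $[0,\Delta t)$ because of the absorbing self-loop.
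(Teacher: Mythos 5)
Your proposal is correct and follows essentially the same route as the paper's proof: both arguments establish the pathwise identity between the return $\sum_{k=0}^\infty r(S_k)$ and the safety indicator $\mI[X_k \in \SSet,\,\forall k \in \mathcal{N}_\tau]$ — using the facts that $\mI[\tilde{H}_k \in \mathcal{G}]$ singles out the unique index $k=N(\tau)$ and that absorption permanently records either a safety violation or the expiration of the horizon — and then conclude by taking the conditional expectation. The only cosmetic difference is that the paper passes through the intermediate sum-of-multiplicative-costs representation $\sum_{k=0}^{N(\tau)} P_k\, \mI[\tilde{H}_k \in \mathcal{G}]$ with a safe/unsafe trajectory case split, whereas you collapse the infinite sum directly to the single term $r(S_{N(\tau)})$.
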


\begin{proof}
    See Appendix\,A. 
\end{proof}

Since the reward function $r$ contains the term $\mI[S_k \notin \mathcal{S}_\mathrm{abs}]$, the reward is always zero for $S_k \in \mathcal{S}_\mathrm{abs}$. Thus, the value function $v^u$ can also be written as 
\begin{align}
     v^{u}(s) = \mathbb{E} \left[ \sum_{k=0}^{N_\mathrm{f}} r( S_k ) ~\middle | ~ S_0= s, u  \right], \label{eq:safe_ocp_cost}
\end{align}
where $N_\mathrm{f}$ is the first entry time to $\mathcal{S}_\mathrm{abs}$ given by 
\begin{align}
    N_\mathrm{f} := \inf \left\{ j \in \mZ_+ \,\middle|\,  S_j \in \mathcal{S}_\mathrm{abs} \right\}. 
\end{align}
Thus, we can consider an episodic RL problem by treating $\mathcal{S}_\mathrm{abs}$ as the terminal states. 
The action-value function $q^u(s, a)$, defined as the value of taking an action $a \in \mathbb{U}$ in state $s$ and thereafter following the policy $u$, is given by 
\begin{align}
 q^u(s, a) := \mathbb{E}\left[ \sum_{k=0}^{N_\mathrm{f}} r(S_k) ~\middle|~ S_0 = s, U_0 = a, u  \right].
\end{align}
The objective of RL is to find the optimal action-value function defined as 
\begin{align}
  q^\ast(s, a) := \sup_{ u \in \mathcal{U} } q^u(s, a).
\end{align}


\subsection{PDE Characterization of Safety Probability} \label{sec:pde}

To implement the technique of PINN, a PDE condition is introduced in this subsection. 
This is achieved based on the Hamilton-Jacobi-Bellman (HJB) theory of stochastic optimal control for a class of reach-avoid problems~\cite{MohajerinEsfahani2016}. 
The safety problem can be regarded as a special case of reach-avoid problems, which determines whether there exists a control policy such that the process $X$ reaches a target set $A$ prior to entering an unsafe set $B$.  
In \cite{MohajerinEsfahani2016}, for the continuous-time setting of the SDE \eqref{eq:sde}, an exit-time problem is considered to characterize the function given by 
\begin{align}
 V^{\bar{u}}(t_\mathrm{s},x) := \mE [\mI [X_{T_\mathrm{e}}^{t_\mathrm{s},x; \bar{u} } \in A ]], ~ T_\mathrm{e} := \min( T_B, t_\mathrm{f} ),
 \label{eq:ocp_v}
\end{align}
where the process $\{ X_t^{t_\mathrm{s},x;\bar{u}} \}_{t \in [t_\mathrm{s}, t_\mathrm{f}]}$ represents the unique strong solution of \eqref{eq:sde} for the time interval of $[t_\mathrm{s}, t_\mathrm{f}]$ starting from the state $x$ under the control process $\bar{u}$, which belongs to the set $\mathcal{U}_{t_\mathrm{s}}$ of progressively measurable maps into $\mathbb{U}$. 
The random variable $T_B$ stands for the first entry time to $B$.
By taking $A:= \mathcal{C}$ and $B:= \mathcal{C}^\mathrm{c}$, the function $V^{\bar{u}}(t_\mathrm{s}, x)$ can be rewritten as  
\begin{align}
 V^{\bar{u}}(t_\mathrm{s},x) & = \mE[\mI[X_{T_\mathrm{e}}^{t_s,x;\bar{u}}\in \SSet]] \\
  & = \mE[\mI[X_{t}^{t_s,x;\bar{u}}\in \SSet, \, \forall t \in [t_\mathrm{s}, t_\mathrm{f}]]], 
 \end{align}
where the second equality holds because $\mI[X_{T_\mathrm{e}}^{t_s,x;\bar{u}}\in \SSet] =1$ if and only if the state $X_t$ stays in $\SSet=B^\mathrm{c}$ for $t\in[t_\mathrm{s}, t_\mathrm{f}]$ (see \cite[Proposition\,3.3]{MohajerinEsfahani2016} for a precise discussion). 
Thus, with Assumption\,\ref{assumption1}(d),
we have 
\begin{align}
  V^{\bar{u}}(t_\mathrm{f}-\tau, x) = \lim_{\Delta t \to 0} \Psi^u(\tau, x) = \lim_{\Delta t \to 0} v^u(s), 
\end{align}
when we choose the control process $\bar{u}$ such that it determines the control input as
$U_t = u(t, X_t)$. 

In \cite{MohajerinEsfahani2016}, the optimal value function $V^\ast(t_\mathrm{s}, x) := \sup_{\bar{u} \in \mathcal{U}_{t_\mathrm{s}} } V^{\bar{u}}(t_\mathrm{s}, x)$, is characterized as a solution of an HJB equation.
However, it does not admit a classical solution due to the discontinuity of the payoff function given by the indicator function. 
Instead, $V^\ast$ becomes a discontinuous viscosity solution of a PDE under mild technical conditions \cite[Theorem 4.7]{MohajerinEsfahani2016}. 
Furthermore, to allow the use of numerical solution techniques mainly developed for continuous or smooth solutions, it is shown in \cite{MohajerinEsfahani2016} that one can construct a slightly conservative but arbitrarily precise way of characterizing the original solution by considering a set $A_\epsilon$ smaller than $A$, where $A_\epsilon := \{ x \in A \,|\, \mathrm{dist}(x, A^\mathrm{c}) \ge \epsilon \}$, with $\text{dist}(x, A) := \inf_{y \in A} \| x - y \|$.  
Following \cite{MohajerinEsfahani2016}, to derive a PDE condition to implement PINN, we consider the following function: 
\begin{align}
 q^u_\epsilon(s, a) := \mathbb{E}\left[ \sum_{k=0}^{N_\mathrm{f}} r_\epsilon(S_k) ~\middle|~ S_0 = s, U_0 = a, u  \right], 
\end{align}
where  the function $r_\epsilon$ is given by 
\begin{align}
  r_\epsilon(S_k) := 
      \mI[ \tilde{H}_k \in \mathcal{G} ]\, \mI[ S_k \notin \mathcal{S}_\mathrm{abs}]\, l_\epsilon( \tilde{X}_k ),
\end{align}
with $\SSet_\epsilon := \{ x \in \SSet \,|\, \mathrm{dist}(x, \SSet^\mathrm{c}) \ge \epsilon \}$ and 
\begin{align}
 l_\epsilon (x) :=  \max \left\{  1- \dfrac{\text{dist}(x, \SSet_\epsilon)}{\epsilon}, \, 0\right\}. 
\end{align}

\begin{theorem} \label{thm:hjb}
Consider the system \eqref{eq:augmented_dynamics} derived from the SDE \eqref{eq:sde} and suppose that Assumption\,\ref{assumption1} holds. 
Then, for all $s\in \mathcal{S}$ and $a \in \mathbb{U}$,  $q^\ast(s, a) = \lim_{\epsilon \to 0} q^\ast_\epsilon(s, a)$, where $q^\ast_\epsilon(s, a) := \sup_{ u \in \mathcal{U} } q^u_\epsilon(s, a)$.
Furthermore, the function $q^\ast_\epsilon(s, a)$ is the continuous viscosity solution of the following partial differential equation in the limit of $\Delta t \to 0$:  
for $s \in (0, \infty) \times \SSet$, 
\begin{align} 
      & 
      \partial_s  q^\ast_\epsilon(s, a^\ast) 
      \tilde{f}(s, a^\ast) 
      \notag \\ & \hspace{5mm} 
      +\dfrac{1}{2} \mathrm{tr} \left[ \tilde{\sigma}(s,a^\ast)\tilde{\sigma}(s,a^\ast)^\top  \partial_s^2 q^\ast_\epsilon(s, a^\ast)  
      \right] = 0,  \label{eq:pde}
\end{align} 
where the function $\tilde{f}$ and $\tilde{\sigma}$ are given by  
\begin{align}
    \tilde{f}(s, a) := 
    \begin{bmatrix}
       -1 \\   f(x, a)
    \end{bmatrix},
    \quad
    \tilde{\sigma}(s, a) := 
    \begin{bmatrix}
        0 \\ \sigma(x, a)        
    \end{bmatrix}, 
\end{align}
and $a^\ast := \argsup_{a \in \mathbb{U}} q^\ast(s, a)$. 
The boundary conditions are given by 
\begin{align}
  & q^\ast_\epsilon([0, x^\top]^\top, a^\ast) = l_\epsilon( x ), \quad  \forall x \in \mathbb{X}, \label{eq:boundary_tau_zero} \\
    & q^\ast_\epsilon([\tau, x^\top]^\top, a^\ast) = 0, \quad   \forall \tau \in \R, ~ \forall x \in \partial\SSet.    \label{eq:boundary_unsafe}
\end{align}
\end{theorem}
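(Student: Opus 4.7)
The plan is to reduce the claim to the continuous-time reach-avoid framework of \cite{MohajerinEsfahani2016} by two limiting procedures: first pass $\Delta t \to 0$ to pass from the discrete-time action-value functions $q^u_\epsilon$, $q^u$ to their continuous-time counterparts, and then separately control the smoothing parameter $\epsilon$ to obtain $q^\ast = \lim_{\epsilon \to 0} q^\ast_\epsilon$. Throughout, the augmentation with the time-to-go coordinate $\tilde H_k$ turns the nonautonomous horizon-$\tau$ problem into a time-homogeneous exit problem on the augmented state $s = [\tau, x^\top]^\top$, which is exactly the form that the HJB machinery of \cite{MohajerinEsfahani2016} applies to.

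First I would extend Proposition\,\ref{prop:RL} to the action-value level, obtaining $q^u(s,a) = \mathbb{P}[X_k \in \mathcal{C}, \forall k \in \mathcal{N}_\tau \mid X_0 = x, U_0 = a, u]$ and, analogously, $q^u_\epsilon(s,a) = \mathbb{E}[l_\epsilon(X_{K^\ast}) \mathds{1}[X_k \in \mathcal{C}, \forall k \le K^\ast] \mid X_0 = x, U_0 = a, u]$ where $K^\ast$ is the unique index with $\tilde H_{K^\ast} \in \mathcal{G}$. Since $0 \le l_\epsilon \le \mathds{1}_{\mathcal{C}}$ and $l_\epsilon \to \mathds{1}_{\mathrm{int}(\mathcal{C})}$ pointwise as $\epsilon \to 0$, dominated convergence gives $q^u_\epsilon \uparrow q^u$ for every $u$ on the interior; taking the sup over $u \in \mathcal{U}$ and using Assumption\,\ref{assumption1}(d) to show that the boundary contributes null sets yields $q^\ast_\epsilon \to q^\ast$ pointwise. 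Passing $\Delta t \to 0$ and invoking Assumption\,\ref{assumption1}(e) identifies the continuous-time limit of $q^u_\epsilon(s,a)$ with the reach-avoid value $\mathbb{E}[l_\epsilon(X_{T_e}^{t_\mathrm{s},x;\bar u})]$ of \eqref{eq:ocp_v}, where $t_\mathrm{f} - t_\mathrm{s} = \tau$ and $\bar u$ starts with action $a$.

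Next I would derive the PDE via the standard dynamic-programming argument on the augmented dynamics. Because $\tilde H$ decreases deterministically at unit rate while $\tilde X$ follows \eqref{eq:sde}, the infinitesimal generator of the augmented diffusion acting on a smooth function $\varphi(s)$ is $\mathcal{L}^a \varphi(s) = -\partial_\tau \varphi + (\partial_x \varphi)^\top f(x,a) + \tfrac{1}{2} \mathrm{tr}[\sigma\sigma^\top \partial_x^2 \varphi]$, which is precisely what the drift/diffusion pair $(\tilde f, \tilde\sigma)$ encodes through the compact form $\partial_s \varphi \cdot \tilde f + \tfrac{1}{2}\mathrm{tr}[\tilde\sigma\tilde\sigma^\top \partial_s^2 \varphi]$. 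Assumption\,\ref{assumption1}(a)--(c) ensures uniform ellipticity in the $x$ directions and continuity of the coefficients, which is what \cite[Theorem\,4.7]{MohajerinEsfahani2016} needs to assert that $q^\ast_\epsilon$ is the continuous viscosity solution of $\sup_{a \in \mathbb{U}} \mathcal{L}^a q^\ast_\epsilon = 0$ on $(0,\infty) \times \mathcal{C}$; evaluating the supremum at $a^\ast$ gives \eqref{eq:pde}. The boundary condition \eqref{eq:boundary_tau_zero} follows because when $\tau = 0$ the reward \eqref{eq:reward} fires exactly once and returns $l_\epsilon(x)$, while \eqref{eq:boundary_unsafe} follows because states with $x \in \partial \mathcal{C} \subset \mathcal{C}^\mathrm{c}$ are absorbing and carry zero reward thereafter.

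The main obstacle is the viscosity-solution step: unlike in the standard smooth exit-problem setting, the payoff $l_\epsilon$ is only Lipschitz, the absorbing region has a nonsmooth corner where $\{\tilde H < 0\}$ meets $\{\tilde x \in \mathcal{C}^\mathrm{c}\}$, and the supremum operator interacts nontrivially with the time-to-go augmentation. Addressing this cleanly requires checking that the hypotheses of \cite[Theorem\,4.7]{MohajerinEsfahani2016} carry over verbatim once $(A, B)$ is taken to be $(\mathcal{C}_\epsilon, \mathcal{C}^\mathrm{c})$ and the exit time $T_\mathrm{e}$ is taken on the augmented state; the smoothing provided by $l_\epsilon$ together with the uniform ellipticity in Assumption\,\ref{assumption1}(c) is what ultimately upgrades the discontinuous viscosity solution (which $q^\ast$ itself would be) to a continuous viscosity solution for $q^\ast_\epsilon$. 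The convergence $q^\ast_\epsilon \to q^\ast$ then recovers the original, possibly discontinuous, safety probability as an $\epsilon$-limit of PDE-characterized surrogates that are amenable to PINN-style training.
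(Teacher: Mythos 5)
Your plan follows essentially the same route as the paper's proof: both reduce the claim to the reach--avoid exit problem of \cite[Theorem 4.7]{MohajerinEsfahani2016} on the augmented state $s=[\tau,x^\top]^\top$ (whose generator is exactly $\partial_s(\cdot)\tilde f+\tfrac12\mathrm{tr}[\tilde\sigma\tilde\sigma^\top\partial_s^2(\cdot)]$), use the Lipschitz payoff $l_\epsilon$ to obtain a \emph{continuous} viscosity solution, and use Assumption\,\ref{assumption1}(e) to pass from $q^u_\epsilon$ to its continuous-time counterpart. The only step you compress that the paper spells out is why the greedy action $a^\ast=\argsup_a q^\ast(s,a)$ actually attains the supremum in the Hamiltonian $\sup_a \mathcal{L}^a q^\ast_\epsilon$: the paper justifies this via the one-step identity $q^\ast_\epsilon(s,a)=r_\epsilon(s)+\mE[v^\ast_\epsilon(s^\prime)\mid s,a]$ and a difference-quotient/It\^{o} argument, which you should include since $a^\ast$ is defined through $q^\ast$ rather than through the generator.
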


\begin{proof}
   See Appendix\,B. 
\end{proof}

\begin{remark}
 Under Assumption\,\ref{assumption1} and further regularity conditions on the payoff function (i.e., differentiability), the PDE \eqref{eq:pde} can be understood in the classical sense (see e.g., \cite[Theorem IV.4.1]{Fleming06}). This means that the PDE condition can be imposed by the technique of PINN using automatic differentiation of neural networks. 
\end{remark}

\subsection{Physics-informed RL (PIRL) Framework} \label{sec:framework}

Here we present the proposed PIRL framework. 
While in principle any RL algorithms can be considered, here we focus on an extension of 
the Deep Q-Network (DQN) algorithm \cite{Mnih2013} as a simple but practical example. 
The optimal action-value function $q^\ast(s,a)$ will be estimated  by using a function approximator $Q(s, a; \theta)$ with the parameter $\theta$. 

The proposed algorithm 
is presented in Algorithm\,\ref{alg:DQN+PINN}. 
The overall structure 
follows from the DQN algorithm, while we added new statements in the lines 14 to 19 to take samples for PINN and modified the loss function $L$ used in the line 21.
Following the framework of PINN~\cite{Raissi2019}, our loss function $L$ consists of the three terms of $L_\mathrm{D}$ for the data loss of the original DQN, $L_\mathrm{P}$ for the physics model given by the PDE \eqref{eq:pde}, and $L_\mathrm{B}$ for the boundary conditions \eqref{eq:boundary_tau_zero} and \eqref{eq:boundary_unsafe}, \emph{i.e.}, 
\begin{align}
    L = & L_\mathrm{D} + \lambda L_\mathrm{P}
    + \mu L_\mathrm{B},   
    \label{eq:loss_DQN+PINN} 
\end{align}
where $\lambda$ and $\mu$ are the weighting coefficients, and the specific form of each loss is given below.  
After the initializations of the replay memory $\mathcal{D}$, the function approximator $Q$, and its target function $\hat{Q}$, 
%
%
the main loop starting at the line 4 iterates $M$ episodes, and the inner loop starting at the line 6 iterates the time steps of each episode. 
Each episode starts with the initialization of the state $s_0 = [{h}_0, {x}_0^\top]^\top$ in the line 5, which is sampled from the distribution $P_\mathrm{D}$ given by 
\begin{align}
    P_\mathrm{D}(s_0) = 
    \begin{cases}
       1/|\Omega_\mathrm{D}|, &  {h}_0 = \tau_\mathrm{D} \land {x}_0 \in \Omega_\mathrm{D}, \\
       0, & \text{otherwise}, 
    \end{cases}
\end{align}
where $\tau_\mathrm{D} \in \R_+$ is the time interval of the data acquired through the DQN algorithm, which can be smaller than $\tau$. 
The set $\Omega_\mathrm{D} \subset \mathbb{X} $ is the domain of possible initial states, and $|\Omega_\mathrm{D}|$ is its volume. 
At each time step $k$, through the lines 7 to 10, a sample of the transition $(s_k, a_k, r_k, s_{k}^\prime )$ of the state $s_k$, the action $a_k$, the reward $r_k$, and the next state $s_k^\prime$ is stored in the replay memory $\mathcal{D}$.  
In the lines 11 to 13, a random minibatch $\mathcal{S}_\mathrm{D}$ of transitions is taken from $\mathcal{D}$, and the set $\mathcal{Y}_\mathrm{D}$ of the target values is calculated using the target q-function $\hat{Q}$, where the $j$-th element $y_j$ of $\mathcal{Y}_\mathrm{D}$ is given by\footnote{The index $j$ is independent of the time step $k$. Random sampling from different time steps improves the stability of learning process by reducing non-stationarity and correlation between updates \cite{Mnih2013}.} 
\begin{align}
    y_j =  
    \begin{cases}
      r_j, & \textrm{for terminal}~s_{j}^\prime, \\
      r_j + \max_{a} \hat{Q}(s_j^\prime, a; \hat{\theta}), & \textrm{otherwise}. 
        \end{cases}  \label{eq:Q_target}
\end{align}
Then, the loss function $L_\mathrm{D}$ 
is given by 
\begin{align}
    & L_\mathrm{D}(\theta; \mathcal{S}_\mathrm{D}, \mathcal{Y}_\mathrm{D}) = \dfrac{1}{|\mathcal{S}_\mathrm{D}|} \sum_j  (y_j - Q(s_j, a_j; \theta))^2.
    \label{eq:DQNloss}
\end{align}
%
%
%
To calculate the loss term $L_\mathrm{P}$, a random minibatch $\mathcal{S}_\mathrm{P} =  \{ s_l \}$ is taken at the line 15. 
Each element $s_l=[h_l, x_l^\top]^\top$ is sampled from the distribution $P_\mathrm{P}$ given by 
\begin{align}
    P_\mathrm{P}(s_l) = 
    \begin{cases}
       1/(\tau |\Omega_\mathrm{P}|), & {h}_l \in [0, \tau] \land {x}_l \in \Omega_\mathrm{P}, \\
       0, & \text{otherwise}, 
    \end{cases}
\end{align}
with $\Omega_\mathrm{P} \subset \SSet$ that specifies the domain where the PDE \eqref{eq:pde} is imposed. 
In the line 16, the set of greedy actions $\mathcal{A}_\mathrm{P} = \{ a_l^\ast \}$ is calculated by $a_l^\ast = \arg \max_a Q(s_l, a; \theta)$. 
Then, the PDE loss $L_\mathrm{P}$ can be defined as 
\begin{align}
 L_\mathrm{P}(\theta; \mathcal{S}_\mathrm{P}, \mathcal{A}_\mathrm{P}) =  \dfrac{1}{|\mathcal{S}_\mathrm{P}|} \sum_l W_\mathrm{P}(s_l, a_l^\ast; \theta)^2 
\end{align}
with the residual function $W_\mathrm{P}(s_l, a_l^\ast; \theta)$ given by 
\begin{align} 
   W_\mathrm{P}(s_l, a_l^\ast; \theta) := 
   &  
   \partial_s Q(s_l,a_l^\ast; \theta) 
   \tilde{f}(s_l, a_l^\ast) \notag \\
   &  + \dfrac{1}{2} \mathrm{tr} \left[ \tilde{\sigma}(s_l, a_l^\ast)\tilde{\sigma}(s_l, a_l^\ast)^\top 
   \partial_s^2 Q(s_l,a_l^\ast;\theta) 
   \right]. 
  \end{align} 
%
%
For the boundary loss $L_\mathrm{B}$, as stated in the line 18, a minibatch $\mathcal{S}_\mathrm{B} = \{ s_m \}$ with $s_m = [h_m, x_m^\top]^\top$ is taken by using the 
distribution $P_\mathrm{B}(s)$ given by 
\begin{align}
    P_\mathrm{B}(s) = 
    \begin{cases}
       1/(2|\Omega_\mathrm{P}|), & {h_m}=0 \land {x_m} \in \Omega_\mathrm{P}, \\
       1/(2\tau |\Omega_\mathrm{B}|), & h_m \in [0, \tau ] \land {x} \in \Omega_\mathrm{B},  \\
       0, & \text{otherwise}.
    \end{cases}
\end{align}
where 
$\Omega_\mathrm{B} \subset \partial\SSet $ stands for the lateral boundary. 
The loss $L_\mathrm{B}$ can be defined as 
\begin{align}
 L_B(\theta; \mathcal{S}_\mathrm{B}, \mathcal{A}_\mathrm{B}) =  \dfrac{1}{|\mathcal{S}_\mathrm{B}|} \sum_m W_\mathrm{B}(s_m, a_m^\ast; \theta)^2,  
\end{align}
with the set $\mathcal{A}_\mathrm{B} = \{ a_m^\ast \}$ 
of greedy action 
and the residual $W_\mathrm{B}$ given by 
\begin{align}
   W_\mathrm{B}(s_m, a_m^\ast; \theta) = 
   Q(s_m, a_m^\ast; \theta) - l_\epsilon(x_m). 
\end{align}
Finally, at the line 21, the parameter $\theta$  is updated to minimize the total loss $L$ based on a gradient descent step.  
The parameter $\hat{\theta}$ of the target function $\hat{Q}$ used in \eqref{eq:Q_target} is updated at the line 22 with a smoothing factor $\eta \in (0,1]$.

\begin{algorithm}[t]
\caption{DQN integrated with PINN} \label{alg:DQN+PINN} 
\begin{algorithmic}[1]
\State Initialize replay memory $\mathcal{D}$ to capacity $N_\mathrm{mem}$
\State Initialize function $Q$ with random weights $\theta$
\State Initialize target Q function $\hat{Q}$ with weights $\hat{\theta} = \theta$
\For{ episode $ = 1: M$}
    \State Initialize state $s_0 = [h_0, x_0^\top]^\top\sim P_\mathrm{D}(s_0)$
    \For{ $k= 1:N_\mathrm{f} $}
        \State /*  Emulation of experience  */
        \State  With probability $\epsilon$ select a random action $a_k$ 
        \Statex \hspace{9.5mm} otherwise select $a_k = \arg\max_a Q(s_k, a; \theta)$
        \State Execute action $a_k$
        \Statex \hspace{9.5mm} and observe reward $r_k$ and the next state $s_{k+1}$
        \State Store transition $(s_k, a_k, r_k ,s_k^\prime )$ in $\mathcal{D}$
        \State /*  Sample experiences  */
        \State Sample random minibatch $\mathcal{S}_{D}$ of 
        \Statex \hspace{9.5mm} transitions $(s_j, a_j, r_j ,s_{j}^\prime )$ from $\mathcal{D}$
        \State Set  $\mathcal{Y}_\mathrm{D} = \{ y_j \}$ as in Eq.\,\eqref{eq:Q_target}
        \State /*  Sample minibatch for PDE */
        \State Sample minibatch $\mathcal{S}_{P}$ with $s_l \sim P_\mathrm{P}(s)$ 
        \State Set $\mathcal{A}_\mathrm{P}$  with $a_l^\ast = \arg \max_{a} Q(s_l,a;\theta)$
        \State /*  Sample minibatch for boundary conditions */
        \State Sample minibatch $\mathcal{S}_{B}$ with $s_m \sim P_\mathrm{B}(s)$ 
        \State Set $\mathcal{A}_\mathrm{B}$ with $a_m^\ast = \arg \max_{a} Q(s_m,a;\theta)$
        \State /*  Update of weights */
        \State Perform a gradient descent step on
        \Statex \hspace{9.5mm} loss function $L$ in Eq.\, \eqref{eq:loss_DQN+PINN} with respect to $\theta$
        \State Update target weight $\hat{\theta} \gets \eta \theta + (1-\eta ) \hat{\theta}$
        \EndFor
\EndFor
\end{algorithmic}
\end{algorithm}

With this algorithm, the length of each episode scales with the parameter $\tau_\mathrm{D}$, and it can be chosen as equal to the outlook horizon $\tau$ or shorter. 
When we set $\tau_\mathrm{D} < \tau$, 
the PDE constraint is imposed on the entire time domain of $[0, \tau]$, and the safety probability is learned only from experiences with shorter time interval $\tau_\mathrm{D}$. 
In this case, the safety probability predicted by the PINN has bounded error 
\cite[Theorem\,6]{Wang2023}. 
This is beneficial in the situation where long-term trajectories for rare unsafe events can be hardly obtained.

\section{Numerical Example} \label{sec:example}

This section demonstrates the effectiveness of the PIRL algorithm through a proof-of-concept numerical example.  
Consider the SDE \eqref{eq:sde} with the state space $\mathbb{X} := \{ x = [x_1, x_2]^\top |\, x \in \R^2 \}$, the control space $\mathbb{U} := [-1, 1] \subset \R $, 
and the functions $f$ and $\sigma$ given by 
\begin{align}
    f(x, u) = 
    \begin{bmatrix}
      -x_1^3 - x_2 \\  x_1 + x_2 + u   
    \end{bmatrix}, \quad
    \sigma(x, u) = 
    \begin{bmatrix}  0.2 & 0\\ 0 & 0.2 \end{bmatrix}. 
\end{align}
This example is based on \cite{Beard1997} and has an unstable equilibrium point $x^\ast = [0, 0]^\top$, satisfying $f(x^\ast, 0) = 0$. 
Here, we consider the safe set $\SSet$ given as follows\footnote{To satisfy Assumption\,\ref{assumption1}(d), 
one can arbitrarily chose a sufficiently large bounded set $\mathcal{C}^\mathrm{c}$ to cover a part of unsafe region in $\mathbb{X}$ of interest.}:
\begin{align}
    \SSet = \{ x \in \mathbb{X} \,|\, (1- x_2^2) > 0 \}. 
\end{align}
%
%
For the implementation of the proposed DQN based algorithm, which admits a discrete action space, the control was restricted to $u \in \{ -1.0,\, -0.5, \, 0, \, 0.5, \, 1.0 \}$. 
This restriction does not affect the results when the underlying optimal control problem has a  “bang-bang” nature \cite{Rubies-Royo2019}. 
For the function approximator $Q$, we used a neural network with 3 hidden layers with 32 units per layer and the hyperbolic tangent (\texttt{tanh}) as the activation function. 
The batch sizes are $|S_\mathrm{D}|= |S_\mathrm{B}|=|S_\mathrm{P}|=64$, and the weighting coefficients were chosen as $\mu = 1$ and $\lambda = 1\times10^{-2}$. 
The initial state of each episode 
was randomly sampled from $P_\mathrm{D}$ with $\Omega_\mathrm{D}  = \{ [x_1, x_2] \in \R^2 \,|\, |x_1|\le 1.5, |x_2|\le 1.0 \}$.  
The set $\Omega_\mathrm{P}$ and $\Omega_\mathrm{B}$ were given as 
$\Omega_\mathrm{P} = \{ [x_1, x_2]  \,|\, |x_1|\le 1.5, |x_2|\le 0.9 \}$ and $\Omega_\mathrm{B} = \{ [x_1, x_2]  \,|\, |x_1|\le 1.5, |x_2|= 1.0 \}$, respectively. 
Our implementation is available at here\footnote{ {\url{https://github.com/hoshino06/PIRL_ACC2024}} }.

\begin{figure}[!t]
 \centering
\subcaptionbox{Nominal controller}
   { \includegraphics[width=0.465\linewidth]{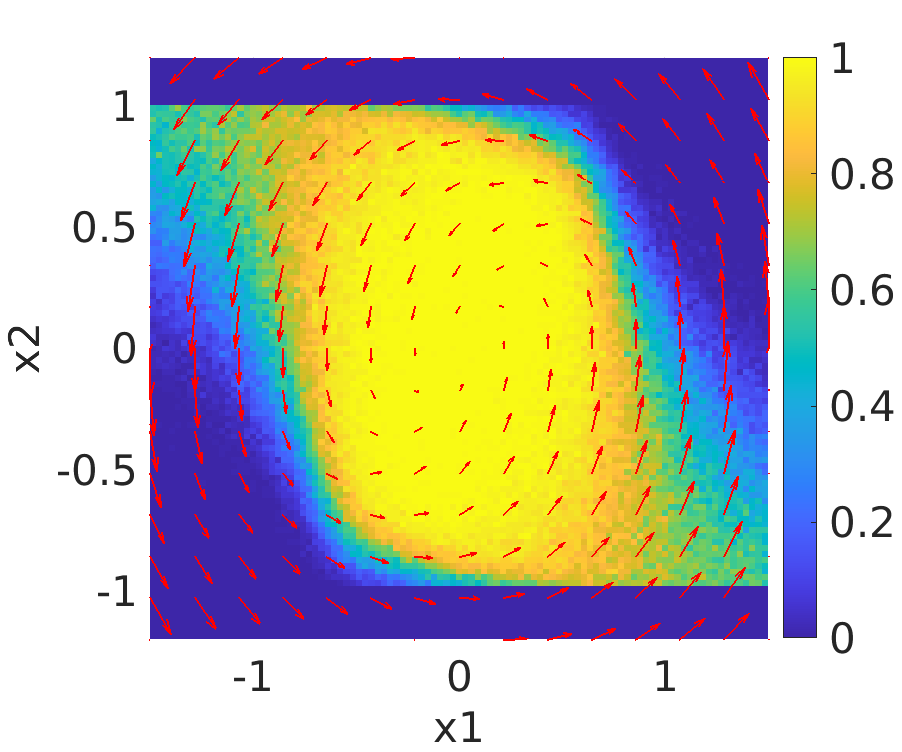} }   
\subcaptionbox{Proposed PIRL}
   { \includegraphics[width=0.465\linewidth]{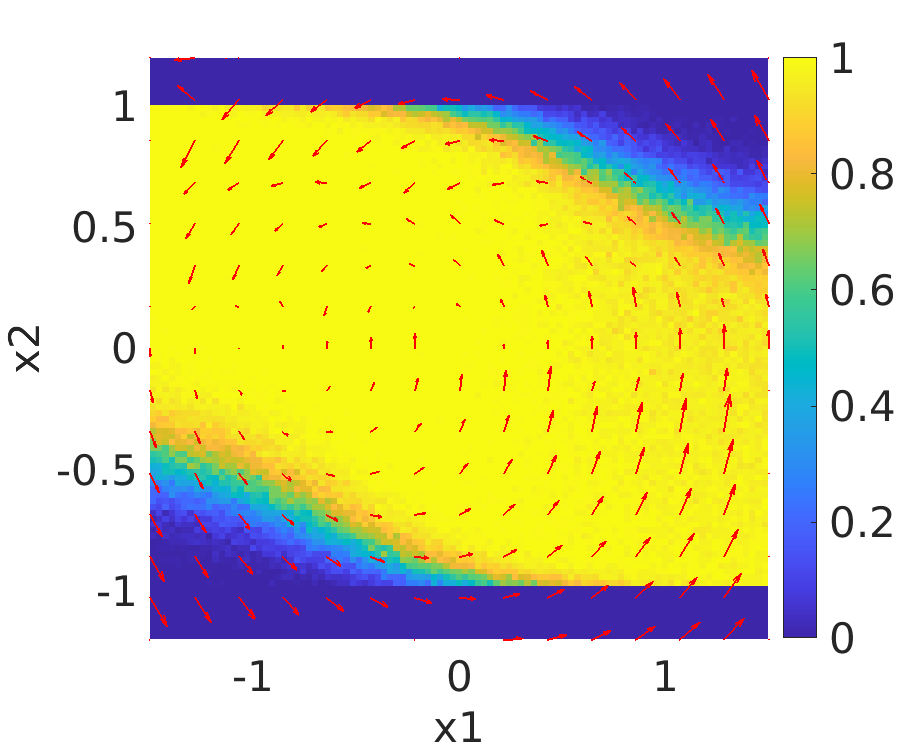} }  
 \caption{Safety probability for the outlook horizon of $\tau=2.0$.}  \label{fig:safety_prob}
 \vspace{-3mm}
\end{figure}

\begin{figure}[!t]
 \centering
 \includegraphics[width=0.95\linewidth]{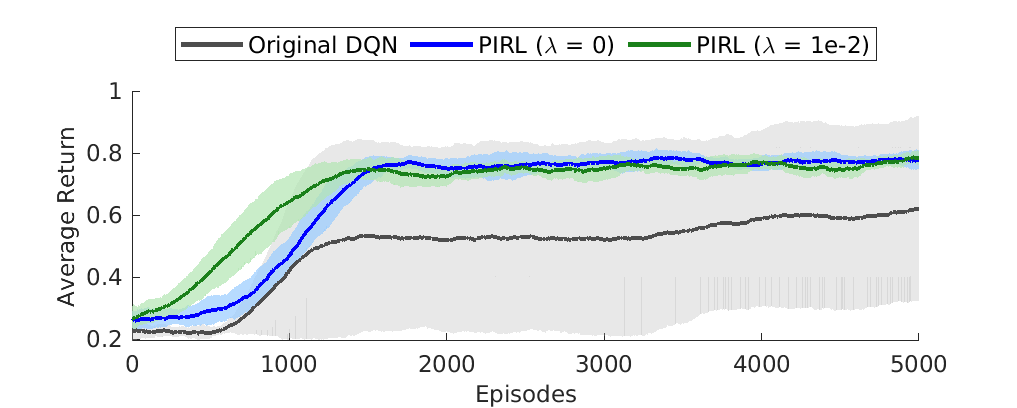} 
 \caption{Learning progress.}  \label{fig:learning_process}
  \vspace{-3mm}
\end{figure}

\textit{Benefit of maximally safe policy and probability.}
\Figref{fig:safety_prob} shows the safety probability for the outlook horizon $\tau=2.0$ with (a) nominal controller and (b) controller learned by PIRL. 
Here, the nominal controller was obtained by using 
the technique of feedback linearization and then applying the LQR theory as in \cite{Beard1997}. 
The safety probabilities in the figure were calculated by a standard Monte Carlo simulation (the estimation accuracy by the function approximator $Q$ will be discussed later). 
The red arrows in the figure show the vector field of the deterministic part of the dynamics.  
When the learned policy is used, the system achieves higher safety probability.  
Thus, when the learned probability is used in probabilistic safety certificates such as \cite{Wang2022}, it allows the system to explore wider regions.

\textit{Efficient learning despite sparse reward.}
\Figref{fig:learning_process} shows the training progress. 
Besides the plot for the proposed PIRL shown by \emph{green}, the \emph{black} line shows the result with the original DQN, and the  \emph{blue} line the PIRL with $\lambda= 0$, which means only boundary conditions were imposed. 
The solid curves correspond to the mean of eight repeated experiments, and the shaded region shows their standard deviation. 
With the original DQN, the agent has to learn only from sparse zero/one rewards, and often fails to find a safe policy. 
In contrast, with the proposed PIRL, a safe policy can be found despite the sparse rewards, 
and the averaged return (corresponds to the safety probability) rises with fewer samples especially at the initial phase. 

One of the most common solutions to the issue of sparse reward is reward shaping~\cite{Nilaksh2024}. 
For example, one could design a reward $r_\mathrm{rs}$ to include information about the distance from the boundary of the safe set: 
\begin{align}
    r_\mathrm{rs}(s) := r(s) + c (1 - x_2^2). 
\end{align}
\Figref{fig:reward_shaping} shows the comparison of the averaged safety probability achieved at the initial phase of the training, where $c=0.05$. 
It can be seen that the proposed PIRL can learn with fewer experiences as well as the reward shaping. 
This is because imposing physics loss allows propagation of reward information from neighbors and boundaries, and can serve as an alternative to reward shaping.


\begin{figure}[!t]
 \centering
 \includegraphics[width=0.9\linewidth]{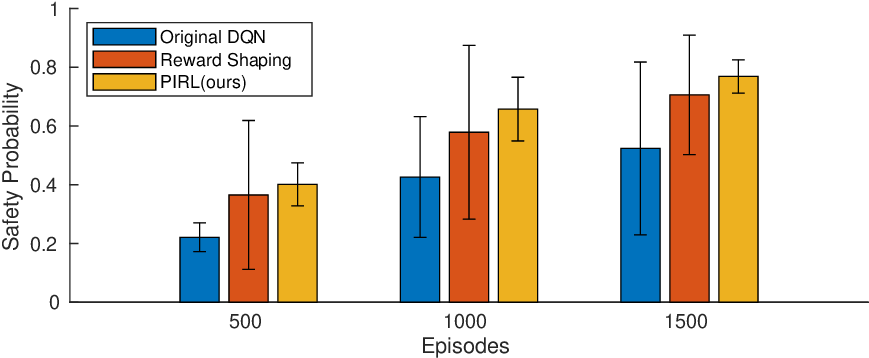}
 \caption{Comparison with reward shaping.} \label{fig:reward_shaping}
  \vspace{-2mm}
\end{figure}

\textit{Generalization of PIRL.}
\Figref{fig:estimation_accuracy} shows the accuracy of the safety probability for $\tau = 2.0$ estimated by the function approximater $Q$ learned with $\tau_\mathrm{D} \le \tau$, with and without PDE condition\footnote{Without PDE condition means $\lambda=0$ but imposing the boundary conditions. With PDE condition,  
too large $\lambda$ led to an unstable behavior in the training process due to an excessive exploitation of the greedy policy. Here it was chosen as $5\times 10^{-3}$ for $\tau_\mathrm{D} = 1.5$ and $3\times 10^{-3}$ for $\tau_\mathrm{D} = 1.0$ to avoid an unstable behavior in the training process.}. 
The bar plot shows the mean squared error between the output of the learned neural network and the Monte Carlo calculation over 
$10 \times 10$ equally distributed points in the state space, and error bar represents its standard deviation over eight repeated experiments.  
On the other hand, \figref{fig:num_unsafe_events} shows the number of unsafe events during the training process with and without the PDE constraint. 
By reducing $\tau_\mathrm{D}$, the number of unsafe events can be reduced, but there is a stringent trade-offs between the estimation accuracy and the length of $\tau_\mathrm{D}$ when the PDE constraint is not imposed ($\lambda=0$). 
In contrast, with the proposed PIRL, the safety probability is accurately estimated without acquiring data no longer than $\tau_\mathrm{D}$ while reducing the number of unsafe events. 
This is beneficial especially in situations when safety must be ensured for a \textit{longer} period than sampled trajectories and when safe actions must be learned \textit{without} sufficiently many rare events and unsafe samples.  

\begin{figure}[!t]
 \centering 
 \subcaptionbox{Estimation accuracy for $\tau = 2.0$ \label{fig:estimation_accuracy}}{
 \includegraphics[width=0.85\linewidth]{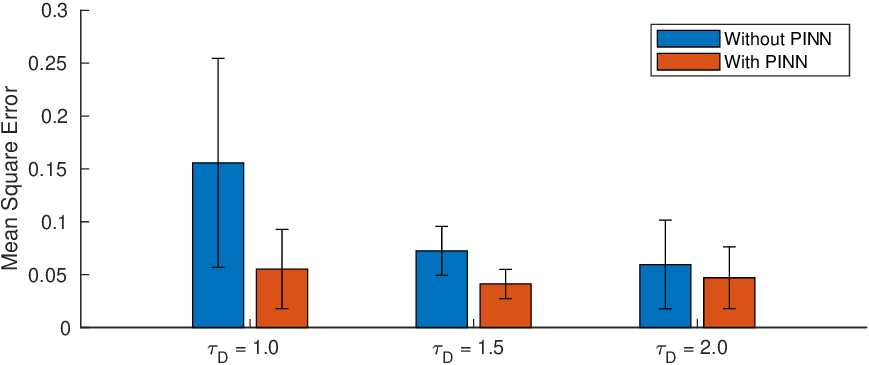}}
 \subcaptionbox{Number of unsafe events \label{fig:num_unsafe_events}}{
 \includegraphics[width=0.85\linewidth]{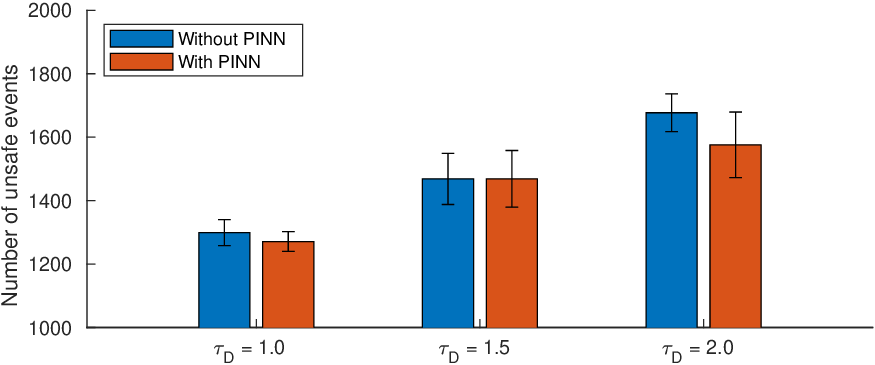}}
 \caption{Results of generalization with $\tau_\mathrm{D} \le \tau$.}  \label{fig:generalization}
 \vspace{-2mm}
\end{figure}



\section{Conclusions}

In this paper, we proposed a Physics-informed Reinforcement Learning (PIRL) for efficiently estimating the safety probability under maximally safe actions. 
This was based on the exact characterization of the safety probability as the value function of an RL problem and the derivation of a PDE condition satisfied by the action-value function. 
The effectiveness of PIRL has been demonstrated through an example based on the Deep  Q-Network algorithm integrated with the technique of  Physics-informed Neural Network (PINN). 
Future work includes application of this framework to estimate safety probability in real-world tasks such as autonomous driving, and its use in \emph{e.g.}, safe RL.







\section*{APPENDIX}

\subsection{Proof of Proposition\,\ref{prop:RL}} \label{proof:prop_RL}

\begin{proof}
From the fact that $\tilde{H}_k = H_k$ and $\tilde{X}_k = X_k$ hold for $S_k \notin \mathcal{S}_\mathrm{abs}$, 
the safety probability $\SProb^u(\tau, x)$ can be rewritten as follows: 
\begin{align}
   \SProb^u(\tau, x)
    = &\mathbb{E}\left[ P_{N(\tau)}  \,\middle|\,  S_0 = s, u \right]
    \label{eq:sp_rewrite}
\end{align}
with the symbol $P_{k}$ is defined as 
\begin{align}
    P_{k} :=  \prod_{j=0}^k \mathds{1}[ \tilde{X}_j \in \SSet ]. 
\end{align}
Then, this can be further transformed into a form of sum of multiplicative cost as follows:
\begin{align}    
\SProb^u(\tau, x)
    = &\mathbb{E}\bigl[ \left( P_{N(\tau)} \right) \mathds{1}[ \tilde{H}_N \in \mathcal{G} ]  \,\big|\, S_0 = s, u \bigr]  \notag \\
    = &\mathbb{E}\left[ \sum_{k=0}^{N(\tau)} \left( P_{k}\right) \mathds{1}[ \tilde{H}_k \in \mathcal{G}  ] \,\middle|\, S_0 = s, u \right].  
    \label{eq:sp_final}
\end{align}
Here, 
 the transformations from \eqref{eq:sp_rewrite} to \eqref{eq:sp_final} is based on the fact that $\mI[ \tilde{H}_k \in \mathcal{G}]$ is $1$ if $k=N(\tau)$ and $0$ otherwise. 
Given this form of representation, the proof will be completed by showing that the expectation of the return $G = \sum_{k=0}^\infty r(S_k)$ is equal to \eqref{eq:sp_final}. 
First, consider the case where $\tilde{X}_k$ stays inside the safe set $\SSet$ for all $k=0, \dots, N(\tau)$, \emph{i.e.}, the trajectory is safe.  
In this case, we have 
\begin{align}
  & P_{k} = 1, \quad \mI[S_k \notin \mathcal{S}_\mathrm{abs} ] = 1,  \quad \forall k \in \{ 0,\, \dots, \, N(\tau) \}.
\end{align}
Since $\mI[\tilde{H}_k \in \mathcal{G}] = 0$ for all $k\ge N(\tau)+1$, we have 
\begin{align}
   G = & \sum_{k=0}^\infty  \mI[\tilde{H}_k \in \mathcal{G}]\,\mI[S_k \notin \mathcal{S}_\mathrm{abs} ] \notag\\
   = & \sum_{k=0}^{N(\tau)} \mathds{1}[ \tilde{H}_k \in \mathcal{G} ] \,\mI[S_k \notin \mathcal{S}_\mathrm{abs} ] \notag \\
   = &\sum_{k=0}^{N(\tau)} \left( P_{k} \right) \mathds{1}[ \tilde{H}_k \in \mathcal{G} ]. \label{eq:return_safe}
\end{align}
Next, consider the case where $\tilde{X}_{\bar{k}} \notin \SSet$ for some $\bar{k} \in \{ 0, \dots, N(\tau) \}$, \emph{i.e.}, the trajectory is unsafe. 
Then, we have
\begin{align}
  & P_{k} = \mI[S_k \notin \mathcal{S}_\mathrm{abs} ] = 1, \quad \forall k \in \{ 0,\, \dots, \,\bar{k}-1 \}, \\
  & P_{k} = \mI[S_k \notin \mathcal{S}_\mathrm{abs} ] = 0,  \quad \forall k \in \{ \bar{k},\, \dots, \,N(\tau) \}. 
\end{align}
Thus, the return becomes
\begin{align}
   G 
   = & \sum_{k=0}^{N(\tau)} \mathds{1}[ \tilde{H}_k \in \mathcal{G} ] \,\mI[S_k \notin \mathcal{S}_\mathrm{abs} ] \notag \\
   = & \sum_{k=0}^{\bar{k}-1} \mathds{1}[ \tilde{\tau}_k \in \mathcal{G} ] + \sum_{k=\bar{k}}^{N(\tau)} 0 \cdot \mathds{1}[ \tilde{H}_k \in \mathcal{G} ] \notag \\
   = & \sum_{k=0}^{N(\tau)} \left( P_{k} \right) \mathds{1}[ \tilde{H}_k \in \mathcal{G} ]. \label{eq:return_unsafe}
\end{align}
Thus, the expectation of the return $G$ over all possible trajectories, which is represented either by  \eqref{eq:return_safe} or \eqref{eq:return_unsafe}, is equivalent to the safety probability $\SProb^u$ given in \eqref{eq:sp_final}. 
Also, the function $v^u$ takes a value in $[0,1]$, since the return $G$ takes $0$ or $1$. 
\end{proof}

\subsection{Proof of Theorem\,\ref{thm:hjb}} \label{proof:thm_hjb}

\begin{proof}
First, consider the following function for the SDE \eqref{eq:sde} with the exit-time $T_\mathrm{e} = \min(T_{\SSet^\mathrm{c}}, t_\mathrm{f})$: 
\begin{align}
 V^{\bar{u}}_\epsilon(t_\mathrm{s},x) := \mE [l_\epsilon (X_{T_\mathrm{e}}^{t_\mathrm{s},x; \bar{u}} )]. 
\end{align}
Then, it follows from \cite[Theorem 4.7]{MohajerinEsfahani2016} that under Assumption\,\ref{assumption1}(a)-(d), the function $V^\ast_\epsilon(t_\mathrm{s},x) := \sup_{\bar{u} \in \mathcal{U}_{t_\mathrm{s}} }  V^{\bar{u}}_\epsilon(t_\mathrm{s},x)$ is a viscosity solution of the following PDE: 
\begin{align} \label{eq:reach-avoid_HJB}
 \sup_{ a \in \mathbb{U}} \mathcal{L}^a  V_\epsilon^\ast(t_\mathrm{s},x) = 0, & \quad \forall (t_\mathrm{s}, x) \in [0, t_\mathrm{f}) \times \SSet,
\end{align}
where $\mathcal{L}^a$ is the Dynkin operator defined as 
\begin{align}
  \mathcal{L}^a \Phi(t,x) := & \partial_t \Phi(t,x) + f(x,a)^\top \partial_x \Phi(t,x) 
  \notag \\
  & \quad + \dfrac{1}{2}\text{tr}[\sigma(x,a) \sigma(x,a)^\top \partial_x^2 \Phi(t,x) ],
\end{align}
and the boundary condition given by 
\begin{align}
 V_\epsilon^\ast(t_\mathrm{s},x) = l_\epsilon(x), \quad \forall (t, x)  \in [0, t_\mathrm{f}] \times \SSet^\mathrm{c} \cup \{t_\mathrm{f} \} \times \R^n. 
\end{align}
The continuity of the function $V_\epsilon^\ast$ follows from Lipschitz continutity of the payoff function $l_\epsilon$ and uniform continuity of the stopped solution process \cite[Proposition 4.8]{MohajerinEsfahani2016}. 

Here, 
the function $V^{\bar{u}}_\epsilon(t_\mathrm{s}, x)$ can be rewritten as 
\begin{align}
 V^{\bar{u}}_\epsilon(t_\mathrm{s},x) & = \mE[\mI[X_{T_\mathrm{e}}^{t_s,x;\bar{u}}\in \SSet]\, l_\epsilon (X_{T_\mathrm{e}}^{t,x; \bar{u}} ) ]
  \notag \\
  & = \mE[\mI[X_{t}^{t_s,x;\bar{u}}\in \SSet, \, \forall t \in [t_\mathrm{s}, t_\mathrm{f}]] \, l_\epsilon (X_{T_\mathrm{e}}^{t,x; \bar{u}} )],  \notag
\end{align}
where the above transformations follows from the fact that $l_\epsilon(X_{T_\mathrm{e}}^{t_\mathrm{s},x; \bar{u}}) \neq 0$ only if $X_{T_\mathrm{e}}^{t_\mathrm{s},x;\bar{u}}\in \SSet$.
Then, with the function $\Psi^{u}_\epsilon(\tau, x)$ given by 
\begin{align}
  & \Psi^{u}_\epsilon(\tau, x) := \mathbb{E} \left[ \prod_{k=0}^{ N(\tau) } \mI[X_k \in \SSet ]\, l_\epsilon(X_{N(\tau)}) ~\middle | ~ X_0= x, u  \right],  \notag
\end{align}
from the continuity of the function $l_\epsilon$ and Assumption\,\ref{assumption1}(e), we have 
$V^u_\epsilon(t_\mathrm{f}-\tau, x) = \lim_{\Delta t \to 0}\Psi^{u}_\epsilon(\tau, x)$.
Furthermore, with the same arguments as in the proof of Proposition\,\ref{prop:RL}, we have $ \SProb^{u}_\epsilon(\tau, x) = v^{u}_\epsilon(s) $
with 
\begin{align}
  &v^{u}_\epsilon(s) := \mathbb{E} \left[ \sum_{k=0}^{N(\tau)} r_\epsilon( S_k ) ~\middle | ~ S_0= s, u  \right]. 
\end{align}
Since we have 
\begin{align}
     v^{u}_\epsilon(s) = \max_{a \in \mathbb{U}} q^\ast_\epsilon(s, a) \xrightarrow[\Delta t \to 0]{} V^\ast_\epsilon(t_\mathrm{f} -\tau, x), 
\end{align}
and $V^\ast_\epsilon(t_\mathrm{s}, x)$ satisfies the PDE \eqref{eq:reach-avoid_HJB}, the function $q^\ast_\epsilon(s, a)$ satisfies the following PDE as $\Delta t \to 0$: 
\begin{align} \label{eq:HJB_q_func_pre}
 & \sup_{a\in \mathcal{U}} \partial_s q^\ast_\epsilon(s, a^\ast) \tilde{f}(s, a) 
      \notag \\ & \hspace{10mm} 
      +\dfrac{1}{2} \mathrm{tr} \left[ \tilde{\sigma}(s, a)\tilde{\sigma}(s, a)^\top \partial_s^2 q^\ast_\epsilon(s, a^\ast)  \right] = 0.    
\end{align}
Here, from $q^\ast_\epsilon(s, a) = r_\epsilon(s) + \mE[ v^\ast_\epsilon(s^\prime) | s, a] $, where $s^\prime$ is the next state given the current state $s$ and the input $a$, we have 
\begin{align}
    a^\ast & = \argsup_{a \in \mathbb{U} } \, r(s) + \mE[ v^\ast_\epsilon(s^\prime) | s, a] \notag \\ 
    & = \argsup_{a \in \mathbb{U} } \mE[ v^\ast_\epsilon(s^\prime) | s, a] \notag \\ 
    & = \argsup_{a \in \mathbb{U} } \dfrac{\mE[ v^\ast_\epsilon(s^\prime) | s, a] -v^\ast_\epsilon(s)}{\Delta t}
\end{align}
where the above transformation is based on the fact that $r_\epsilon(s)$ and $v^\ast_\epsilon$ are independent of $a$. 
Thus, from the Ito's Lemma, $a^\ast$ maximizes the right-hand side of \eqref{eq:HJB_q_func_pre} as $\Delta t \to 0$, and substituting $a^\ast$ gives \eqref{eq:pde}:
\begin{align}
  & \partial_s  q^\ast_\epsilon(s, a^\ast) 
      \tilde{f}(s, a^\ast) 
      \notag \\ & \hspace{5mm} 
      +\dfrac{1}{2} \mathrm{tr} \left[ \tilde{\sigma}(s,a^\ast)\tilde{\sigma}(s,a^\ast)^\top  \partial_s^2 q^\ast_\epsilon(s, a^\ast)  
      \right] = 0. \notag
\end{align} 
\end{proof}

\section*{ACKNOWLEDGMENT}
The authors would thank Maitham F. AL-Sunni and Haoming Jing for critical reading of drafts and their helpful comments.

\bibliographystyle{IEEEtran}
\bibliography{reference}



\end{document}